\documentclass{article}
\usepackage[nohyperref, accepted]{icml2012}

\usepackage{color}
\usepackage{url}
\usepackage{verbatim} 
\usepackage{subfigure} 
\usepackage{multirow}
\usepackage{algorithm}
\usepackage{algorithmic}
\usepackage{times}
\usepackage{xspace}
\usepackage{graphicx}
\usepackage{epsfig}
\usepackage{amsmath}
\usepackage{amsthm}
\usepackage{amssymb}
\usepackage{natbib}
\usepackage{ctable}

\newcommand{\nedge}{k}
\newcommand{\bt}{\mathbf{t}}

\newcommand{\hide}[1]{}
\newcommand{\xhdr}[1]{\vspace{1.7mm}\noindent{{\bf #1.}}}

\newtheorem{theorem}{Theorem}

\newcommand{\ie}{\emph{i.e.}}
\newcommand{\expo}{{\textsc{Exp}}\xspace}
\newcommand{\pow}{{\textsc{Pow}}\xspace}
\newcommand{\ray}{{\textsc{Ray}}\xspace}

\newcommand{\netinf}{{\textsc{Net\-Inf}}\xspace}
\newcommand{\connie}{{\textsc{ConNIe}}\xspace}
\newcommand{\netrate}{{\textsc{Net\-Rate}}\xspace}
\DeclareMathOperator*{\argmax}{argmax}

\begin{document}

\icmltitlerunning{Submodular Inference of Diffusion Networks from Multiple Trees}

\twocolumn[

\icmltitle{Submodular Inference of Diffusion Networks from Multiple Trees}

\icmlauthor{Manuel Gomez-Rodriguez$^{1,2}$}{manuelgr@stanford.edu}

\icmlauthor{Bernhard Sch\"{o}lkopf$^1$}{bs@tuebingen.mpg.de}

\icmladdress{$^1$MPI for Intelligent Systems and $^2$Stanford University}

\icmlkeywords{inferring networks, diffusion}

\vskip 0.3in
]

\begin{abstract}
Diffusion and propagation of information, influence and diseases take place over increasingly larger networks. We observe
when a node copies information, makes a decision or becomes infected but networks are often hidden or unobserved. 
Since networks are highly dynamic, chan\-ging and growing rapidly, we only observe a relatively small set of cascades before
a network changes significantly.
Scalable network inference based on a small cascade set is then necessary for understanding the rapidly evolving dynamics that govern 
diffusion. 
In this article, we develop a scalable approximation algorithm with pro\-va\-ble near-optimal performance based on submodular maximization 
which achieves a high accuracy in such scenario, solving an open problem first introduced by~\citet{manuel10netinf}.
Experiments on syn\-the\-tic and real diffusion data show that our algorithm in practice achieves an optimal trade-off between accu\-ra\-cy 
and running time.
\end{abstract}

\section{Introduction}
\label{sec:intro}
Over the last years, there has been an increasing interest in understanding diffusion and propagation processes in a broad range of domains: information 
propagation~\citep{manuel10netinf}, social networks~\citep{kempe03maximizing}, viral marketing~\citep{dodds07influentials}, 
epidemiology~\citep{wallinga04epidemic},  and human travels~\citep{brockmann2006scaling}.

In the context of diffusion networks, one of the fundamental research problems is how to infer the connectivity of a network based on diffusion traces~\citep{manuel10netinf, 
manuel11icml, meyers10netinf, snowsill2011refining}. In information propagation, we note when a blog or news site writes about a piece of 
information. However, in many cases, the blogger or journalist does not link to her source and therefore we do not know where she gathered the information from. 
In viral marketing, we get to know when customers buy products or subscribe to services, but typically cannot observe the \emph{trendsetters} who influenced customers' 
decisions. 
Finally, in epidemiology, we can observe when a person gets ill but cannot tell who infected her.
In all these scenarios, we observe spatiotemporal traces of information spread (be it in the form of a meme, a decision, or a virus) but we do not know the paths 
over which information propagates. We note \emph{where and when} information emerges but not \emph{how or why} it does. In this context, inferring the connectivity 
of diffusion networks is essential to reconstruct and predict the paths over which information spreads, maximize sales of a product or stop infections.



\xhdr{Our approach to network inference} We consider that on a fixed hypothetical network, diffusion processes propagate as directed trees through the network. Since we only 
observe the times \emph{when} nodes are reached by a diffusion process, there are many possible propagation trees that explain a set of cascades. 
Naive computation of the model takes exponential time since there is a combinatorially large number of propagation trees. It has been shown that computations over this 
super-exponential set of trees can be performed in cubic time~\citep{manuel10netinf}. However, to the best of our knowledge, efficient optimization of the model has been an open 
question to date. Here, we show that computation over the super-exponential set of trees can indeed be performed in quadratic time and surprisingly, we show that the resulting 
objective function is submodular. Exploiting this natural diminishing property, we can efficiently optimize the objective function to find a near-optimal network with provable guarantees 
that best explain the observed cascades. Lazy evaluation and the local structure of the problem can be used to speed-up our method.
Considering all possible propagation trees enables us to learn a network from fewer observed cascades. This is important since social networks are highly 
dynamic~\citep{backstrom2011supervised}, changing and growing rapidly, and we can only expect to record a small number of cascades over a fixed network.

\xhdr{Related work}
The work most closely related to ours~\citep{manuel10netinf, manuel11icml, meyers10netinf} also uses a generative probabilistic model for inferring diffusion 
networks. \netinf~\citep{manuel10netinf} infers the network connectivity using submodular optimization by considering only the most probable directed tree 
supported by each cascade. \netrate~\citep{manuel11icml} and \connie~\citep{meyers10netinf} infer not only the network connectivity but either prior probabilities of infection 
or transmission rates of infection using convex optimization by considering all possible directed trees supported by each cascade.

The main innovation of this paper is to tackle the network inference problem as a submodular maximization problem in which we do not consider only the most probable 
directed tree as in \netinf but all directed trees supported by each cascade as in \connie and \netrate. By considering all trees, we are able to infer a network more 
accurately than \netinf when the number of observed cascades is small compared to the network size and by using the greedy algorithm for submodular maximization 
in contrast to convex optimization, we are several order of magnitude faster than \connie and \netrate. Therefore, we present a network inference algorithm that may be 
capable of inferring real networks in the order of hundred of thousands of nodes with a small number of observed cascades. This comes with a drawback, our algorithm 
does not infer prior probabilities of infection nor transmission rates but only the network connectivity. However, in practice, our algorithm provides a measure of 
\emph{importance} for each edge of the network through the marginal gain that each edge provides.

Inferring how diffusion propagates over rapidly changing networks is crucial for a better understanding of the dynamics that govern processes taking place over information 
and social networks. In this context, scalability is a key point given the increasingly larger size of such networks and cascade data.

The remainder of the paper is organized as follows: in Section~\ref{sec:formulation}, we describe the model of diffusion and state the network inference problem. Section~\ref{sec:proposed} 
shows an efficient approximation algorithm with \emph{provable} near-optimal performance. Section~\ref{sec:evaluation} evaluates our method using synthetic and real data and we 
conclude with a discussion of our results in Section~\ref{sec:conclusions}.

\section{Problem formulation}
\label{sec:formulation}
In this section, we first describe the diffusion data our algorithm is designed for and continue revisiting the generative model of diffusion introduced recently by~\citet{manuel10netinf}. 
We conclude with a statement of the network inference problem.
\begin{figure}[t] %
\centering
  \subfigure[Cascade $c$ on $G$]{\includegraphics[width=0.18\textwidth]{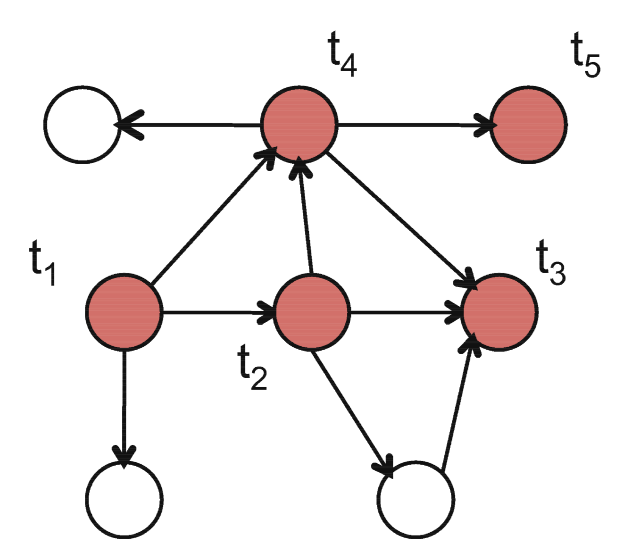} \label{fig:cascade}}
  \subfigure[Spanning trees induced by cascade $c$ on $G$]{\includegraphics[width=0.22\textwidth]{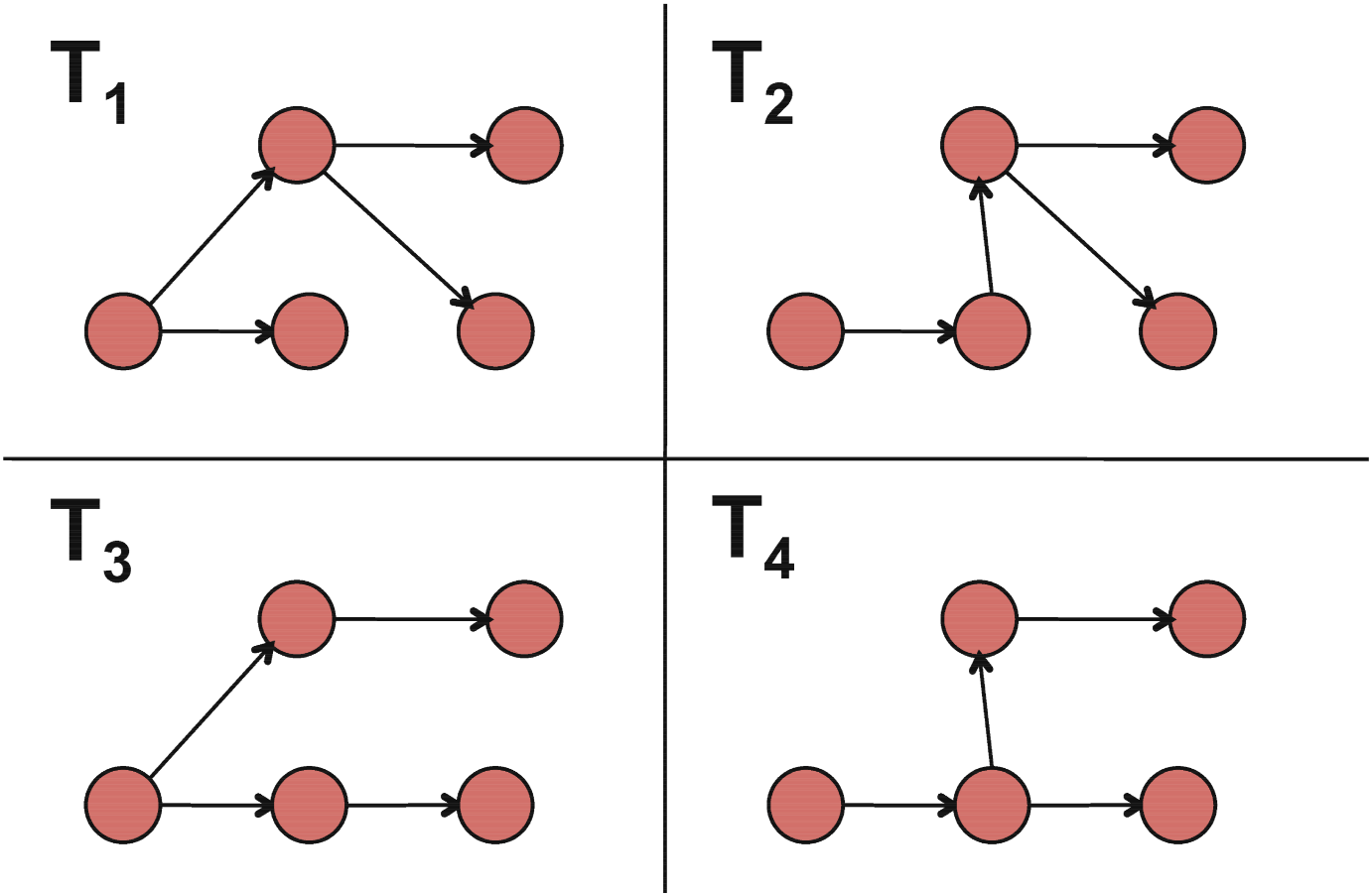} \label{fig:trees}}
  \caption{Panel (a) shows a cascade $\mathbf{t} = \{ t_{1},\dots, t_{5} \}$ on network $G$, where $t_{i-1} < t_{i}$. Panel (b) shows all connected spanning trees induced by 
  cascade $\mathbf{t}$ on $G$, \ie, all possible ways in which a diffusion process spreading over $G$ can create the cascade. \label{fig:illust-cascade-trees}}
\end{figure}

\xhdr{Data} We observe a set $C$ of cascades $\{\bt^{1}, \ldots, \bt^{|C|} \}$ on a fixed population of $N$ nodes. A cascade $\bt^c := (t_1^c, \ldots, t_N^c)$ is simply a N-dimensional 
vector recording when nodes in the population get infected. We only observe the time $t_i^c$ when a node $i$ got infected but not who infected the node neither why it got infected. 
In each cascade, there are typically nodes that are never infected, with infection times that are arbitrarily long. We assume there is an underlying unobserved network $G$ that nodes in 
the population belong to, and cascades propagate over it. Our aim is to discover this unknown network over which cascades originally propagated by using only the recorded infection
times.

\xhdr{Pairwise transmission likelihood} We assume node $j$ can infect node $i$ with prior probability of transmission $\beta$. Now, consider that node $j$ gets infected at 
time $t_j$ and succeeds at infects node $i$ at time $t_i$. We then assume that the infection time $t_i$ depends on $t_j$ through a pairwise transmission likelihood $f(t_i | t_j ; \alpha_{j,i})$. 
As in previous studies of information propagation~\citep{manuel10netinf, manuel11icml} and epidemiology~\citep{wallinga04epidemic}, we consider two 
well-known monotonic parametric models: exponential, $f(t_i | t_j ; \alpha_{j,i}) \propto e^{-\alpha_{j,i} \cdot (t_i - t_j)}$, and power-law, $f(t_i | t_j ; \alpha_{j,i}) \propto (t_i - t_j)^{-1-\alpha_{j,i}}$, 
and one non-monotonic parametric model: Rayleigh, $f(t_i | t_j ; \alpha_{j,i}) \propto (t_i - t_j) e^{-\alpha_{j,i} \cdot (t_i - t_j)^2}$. Although we perform experiments in networks in which the transmission rate $\alpha_{j,i}$ of each edge can be different, in the remainder of the paper, for simplicity, we assume all transmission rates to be equal, $\alpha_{j, i} = \alpha$. Importantly, 
our algorithm does not depend on the particular choice of pairwise transmission likelihood and choosing more complicated parametric or non-parametric likelihoods does not increase its computational complexity.

\xhdr{Likelihood of a cascade for a given tree} We assume that diffusion processes propagate as directed trees, \ie, a node gets infected by action of a single node or parent. Then, 
for a given tree $T$ and cascade $\bt^c$, we can compute the likelihood of the cascade given the tree as follows:
\begin{equation}
f(\bt^c | T) = \prod_{(u,v) \in E_T} f(t_v | t_u ; \alpha),
    \label{eq:ctree}
\end{equation}
where $E_T$ is the edge set of tree $T$. Considering a specific tree $T$ for a cascade $\bt^c$ means to set which edges have spread successfully 
the information. Therefore, given the tree $T$, we can compute the likelihood of the infection times of the nodes in the cascade $\bt^c$ by using simply the pairwise 
transmission likelihood of each edge of the tree.

\xhdr{Probability of a tree in a given network} In order to compute the likelihood of a cascade $\bt^c$ for a given tree $T$, we have considered the tree $T$ to be given. We now compute the 
probability of a tree $T$ in a network $G$ as follows:
\begin{align*}
  P(T | G) &= \prod_{(u,v)\in E_T} \beta \prod_{u \in V_T, (u,x) \in E\backslash E_T} (1-\beta) \\
  &= \beta^{q} (1-\beta)^{r},
\end{align*}
where $V_T$ is the vertex set of tree $T$, $E_T$ is the edge set of tree $T$, $E$ is the edge set of the network $G$ and $q = |E_T| = |V_T|-1$ is the number of edges in $T$ and counts the edges 
over which the diffusion process successfully propagated. For a particular cascade $\bt^c$ and tree $T$, $V_T$ is the set of nodes that belong to $\bt^c$, \ie, nodes where the infection time $t_i < \infty$. 
The first product accounts for the \emph{active} edges in $G$, \ie, edges that define the tree $T$, and the second product accounts for the \emph{inactive} edges in $G$, \ie, edges where the diffusion 
process did not spread. For simplicity, we assume the same prior probability of transmission $\beta$ for every edge of the network $G$.

\xhdr{Likelihood of a cascade in a given network} Now, for a cascade $\bt^c$, we consider all possible propagation trees $T$ that are supported by 
the network $G$, \ie, all possible ways in which a diffusion process spreading over $G$ can create cascade $\bt^c$:
\begin{equation}
  f(\bt^c | G) = \sum_{T \in \mathcal{T}_c(G)} f(\bt^c | T) P(T | G),
    \label{eq:pcasc}
\end{equation}
where $\bt^c$ is a cascade and $\mathcal{T}_c(G)$ is the set of all the directed connected spanning trees on the subnetwork of $G$ induced by the nodes that 
got infected in cascade $\bt^c$, \ie, $t_i \in \bt^c : t_i < \infty$. Figure~\ref{fig:illust-cascade-trees} illustrates the notion of a cascade and all the connected spanning trees $T$ induced by 
its nodes.

All trees $T \in \mathcal{T}_c(G)$ employ the same vertex set $V_T$ and $P(T|G)$ depends only the size of the vertex set $V_T$. Therefore, assuming the same prior probability
of transmission $\beta$ for every edge of the network, $P(T|G)$ is equal for all trees $T$ on the subnetwork of $G$ induced by the nodes that got infected in cascade $\bt^c$ and 
we simplify Eq.~\eqref{eq:pcasc}:
\begin{equation}
  f(\bt^c | G) \propto \sum_{T \in \mathcal{T}_c(G)} \prod_{(u,v)\in E_T} f(t_v | t_u ; \alpha).
    \label{eq:pcasc2}
\end{equation}

Now, assuming conditional independence between cascades given the network $G$, we compute the joint likelihood of a set $C$ of cascades occurring in the network 
$G$ as follows:
\begin{equation}
  f(\bt^1, \ldots, \bt^{|C|}| G) = \prod_{\bt^c \in C} f(\bt^c | G).
  \label{eq:pgraph}
\end{equation}

\xhdr{Network inference problem} Given a set of cascades $\{\bt^1, \ldots, \bt^N\}$, a prior probability of transmission $\beta$ and a pairwise transmission likelihood $f(t_v | t_u ; \alpha)$, 
we aim to find the network $\hat{G}$ such that
\begin{equation}
  \hat G = \argmax_{|G|\leq\nedge} f(\bt^1, \ldots, \bt^N|G),
  \label{eq:pgraph2}
\end{equation}
where the maximization is over all directed networks $G$ of at most $\nedge$ edges.

\section{Proposed algorithm}
\label{sec:proposed}
To the best of our knowledge, the optimization problem defined by Eq.~\eqref{eq:pgraph2} has been considered in\-trac\-ta\-ble in the past and proposed as 
an interesting open problem~\citep{manuel10netinf}. We now show how to efficiently find a solution with \emph{provable} sub-optimality 
guarantees by exploiting a natural diminishing returns property of the network inference problem: submodularity. 

To evaluate Eq.~\eqref{eq:pgraph}, we need to compute Eq.~\eqref{eq:pcasc2} for each cascade $\bt^c$, i.e., compute a sum of likelihoods over all possible 
connected spanning trees $T$ induced by the nodes infected in each cascade. Although the number of trees can be super-exponential in the number of nodes in the 
cascade $\bt^c$, this super-exponential sum can be performed in time polynomial in the number $n$ of nodes in $\bt^c$, by applying Kirchhoff's matrix tree theorem:
\begin{theorem}[Tutte (1948)] \label{thm:tutte}
  Given a directed graph $W$ with non negative edge weights $w_{i,j}$, construct a matrix $A$ such that  $a_{i,j} = \sum_k w_{k, j}$ if $i = j$ and $a_{i,j} = -w_{i,j}$ if $i \neq j$ and denote the 
  matrix created by removing any row $x$ and column $y$ from $A$ as $A_{x,y}$. Then,
  \begin{equation}\label{eq:matrixtree}
  (-1)^{x+y} \det(A_{x,y}) = \sum_{T \in \mathcal{T}(W)} \prod_{(i,j) \in T} w_{i,j},
  \end{equation}
  where $T$ is each directed spanning tree in $W$ that starts in $y$.
\end{theorem}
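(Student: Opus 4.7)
The plan is to prove Theorem~\ref{thm:tutte} by the classical combinatorial argument underlying the (all--minors) Matrix Tree Theorem: expand $\det(A_{x,y})$ through the Leibniz formula, identify each term with a subgraph of $W$, and cancel the non-arborescence contributions via a sign-reversing involution.

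First I would apply the Leibniz formula,
$$\det(A_{x,y}) \;=\; \sum_{\sigma}\, \mathrm{sgn}(\sigma)\prod_{i\neq x} a_{i,\sigma(i)},$$
where $\sigma$ ranges over bijections $[n]\setminus\{x\}\to[n]\setminus\{y\}$. Each diagonal factor $a_{ii}=\sum_{k}w_{k,i}$ would then be further expanded, so that every summand becomes a signed monomial $\prod_{v} w_{e_v}$ in which a single edge $e_v$ of $W$ is chosen at each vertex $v\neq y$ -- either coming from an off-diagonal entry $-w_{v,\sigma(v)}$ (carrying a $-1$) or from one term of a diagonal sum $+w_{k,v}$ (with no sign). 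Globally, each expanded term corresponds to a subdigraph of $W$ in which every non-root vertex has degree one in the appropriate sense.

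Next I would classify these configurations. Because of the degree-one constraint, each configuration is either (i) a directed spanning tree through $y$ as required, or (ii) contains at least one directed cycle disjoint from $y$. A direct sign calculation -- balancing $\mathrm{sgn}(\sigma)$ against the number of off-diagonal $(-1)$ factors -- shows that each configuration of type (i) contributes $(-1)^{x+y}\prod_{e\in T}w_e$, which is precisely what is needed on the right hand side of~\eqref{eq:matrixtree} after multiplication by $(-1)^{x+y}$.

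For configurations of type (ii) I would construct a sign-reversing involution. Choose a canonical cycle $C$ in the configuration -- for example, the cycle through the smallest-indexed vertex that lies on any cycle -- and toggle the ``diagonal versus off-diagonal'' realization of the edge of $C$ at the smallest-indexed vertex of $C$: replace $-w_{u,v^\star}$ drawn from the off-diagonal (with $\sigma(v^\star)=u$) by $+w_{u,v^\star}$ drawn from the diagonal sum (forcing $\sigma(v^\star)=v^\star$ and modifying $\sigma$ by a transposition along $C$), and vice versa. The toggle preserves the selected edge set but produces a paired configuration whose signed contribution is opposite, so such terms cancel in the sum. The main obstacle is exactly this sign-bookkeeping: one must verify that the change in $\mathrm{sgn}(\sigma)$ combined with the gain or loss of the $-1$ factor from the off-diagonal entry yields a net sign reversal. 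This step is delicate but standard -- it is essentially the cycle-cancellation argument of Chaiken \& Kleitman -- and once handled the surviving terms sum to $\sum_{T}\prod_{(i,j)\in T}w_{i,j}$, proving~\eqref{eq:matrixtree}.
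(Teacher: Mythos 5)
The paper offers no proof of Theorem~\ref{thm:tutte}: it is imported verbatim as Tutte's weighted directed matrix--tree theorem and used as a black box to collapse the super-exponential sum over spanning trees into a determinant. There is therefore no in-paper argument to compare yours against, so the assessment below is of your outline on its own terms.

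Your plan is the standard combinatorial proof (Leibniz expansion, interpretation of the resulting monomials as in-degree-one subdigraphs, sign-reversing involution cancelling the cycle-containing configurations), and it is sound in outline. Two points need more care than the sketch gives them. First, for $x\neq y$ your Leibniz sum runs over bijections $[n]\setminus\{x\}\to[n]\setminus\{y\}$, which are not permutations, so $\mathrm{sgn}(\sigma)$ and the global factor $(-1)^{x+y}$ must be defined and tracked consistently; this is exactly where all-minors write-ups become lengthy. A cheaper route to the stated generality is to run your argument only for the principal minor $x=y$, where $\sigma$ is a genuine permutation and each tree term visibly contributes $+\prod_{e}w_{e}$, and then note that every column of $A$ sums to zero (taking $w_{j,j}=0$), so $\mathbf{1}^{\top}A=0$ and the cofactor $(-1)^{x+y}\det(A_{x,y})$ is independent of which row $x$ is deleted; the ``any row $x$'' clause then comes for free. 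Second, be precise that in the expansion each \emph{column} $j\neq y$ selects exactly one edge with head $j$ (either an off-diagonal $-w_{i,j}$ or one summand $w_{k,j}$ of the diagonal entry $a_{j,j}$), so the surviving configurations are exactly the subdigraphs in which every vertex other than $y$ has in-degree one; these are the arborescences emanating from $y$ precisely when acyclic, which is what makes your type (i)/(ii) dichotomy and the cycle-toggling involution work. With those two points nailed down, the argument is complete and matches the Chaiken--Kleitman treatment you cite.
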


In our case, we compute Eq.~\eqref{eq:pcasc2} by setting $w_{i,j}$ to $f(t_j | t_i ; \alpha)$ and computing the determinant in Eq.~\eqref{eq:matrixtree}. We then compute 
Eq.~\eqref{eq:pgraph} by multiplying the determinants of $|C|$ matrices, one for each cascade. 
For a fixed cascade $\bt^c$, edges with po\-si\-tive weights form a directed acyclic graph (DAG) (only edges $(i, j)$ such that $t_i < t_j$ have positive weights) and a DAG 
with a time-ordered labeling of its nodes has an upper triangular connectivity matrix. Thus, the matrix $A_{x, y}$ of Theorem~\ref{thm:tutte}, by construction, is also upper 
triangular. Fortunately, the determinant of an upper triangular matrix is simply the product of the elements of its diagonal and then,
$$
f(\bt^c | G) \propto \prod_{t_j \in \bt^c} \sum_{t_i \in \bt^c : t_i \leq t_j} f(t_j | t_i ; \alpha).
$$

This means that instead of using super-exponential time, we are now able to evaluate Eq.~\ref{eq:pgraph} in time $O(|C| \cdot N^2)$, where $N$ is the size of the largest 
cascade, \ie, the time required to build $A_{x, y}$ and compute the determinant for each of the $|C|$ cascades.
\begin{algorithm}[t]
\caption{Our network inference algorithm\label{alg:modified-netinf}}
\begin{algorithmic}
\REQUIRE $C, \nedge$

\STATE $G \leftarrow \bar{K}$;
\WHILE{$|G| < \nedge$}
  \FORALL{$(j,i) \notin G : \exists \bt^c \in C \mbox{ with } t_j < t_i$}
    \STATE $\delta_{j,i} = 0$, $M_{j,i} \leftarrow \emptyset$;
    \FORALL{$\bt^c : t_j < t_i$}
      \STATE $w_c(m,n) \leftarrow$ weight of $(m,n)$ in $G \cup \{(j,i)\}$;
      \FORALL{$t_m : t_m < t_i, m \neq j$}
        \STATE $\delta_{c,j,i} = \delta_{c,j,i} + w_c(m,i)$;
      \ENDFOR
      \STATE $\delta_{j,i} = \log(\delta_{c,j,i}+w_c(j,i))-\log(\delta_{c,j,i}+1$)
    \ENDFOR
  \ENDFOR

  \STATE $(j^{*},i^{*}) \leftarrow \arg \max_{(j,i) \notin G} \delta_{j,i}$;
  \STATE $G \leftarrow G \cup \{(j^{*},i^{*})\}$;
\ENDWHILE

\RETURN G;
\end{algorithmic}
\end{algorithm}

Until now, we have ignored the role of missed infections~\citep{sadikov11cascades} or external sources as mass media~\citep{katz1955personal,dodds07influentials} 
that can produce disconnected cascades. To overcome this point, we consider an additional node $m$ that represents an external source that can infect \emph{any} 
node $u$ in a cascade. Therefore, we connect the external influence source $m$ to every other node $u$ with an $\varepsilon$-edge. Every node $u$ can get infected 
by the external source $m$ with an arbitrarily small probability $\varepsilon$. It is important to remark that adding the external source results in a tradeoff between false 
positives and false negatives when detecting cascades. The higher the value of $\varepsilon$, the larger the number of nodes that are assumed to be infected by an 
external source.

Putting it all together, we include the additional node $m$ in every cascade $\bt^c$ and we set the likelihood of a diffusion process to spread from $m$ to any node $j$ in 
the cascade $\bt^c$ to $\varepsilon$. We assume that $\varepsilon \leq f(t_j | t_i ; \alpha)$ for any $(i, j)$.
We then define the improvement of log-likelihood for cascade $\bt^c$ under graph $G$ over an empty graph $\bar{K}$:
\begin{equation}
  F(\bt^c|G) = \sum_{t_j \in \bt^c} \log \left(\sum_{t_i \in \bt^c : t_i \leq t_j} w_{c}(i,j)\right),
\label{eq:pgraph3}
\end{equation}
where $w_{c}(i,j)= \varepsilon^{-1} f(t_j | t_i ; \alpha) \geq 0$ for all natural likelihoods, $\sum_{i \in G : t_j \geq t_i} w_{c}(i, j) \geq 1$ and we assume that the $\varepsilon$-edges 
between $m$ and all nodes in the cascade $\bt^c$ exist also for the empty graph $\bar{K}$. 

Finally, maximizing Eq.~\eqref{eq:pgraph2} is equivalent to maximizing the following objective function:
\begin{equation}
  F_C(\bt^1, \ldots, \bt^{|C|}|G) = \sum_{\bt^c \in C} F(\bt^c|G),
\end{equation}
where $G$ is the variable.
\begin{figure*}[!!t]
\centering
  \subfigure[PR (Random, \expo)]{\makebox[0.29\textwidth][c]{\includegraphics[width=0.28\textwidth]{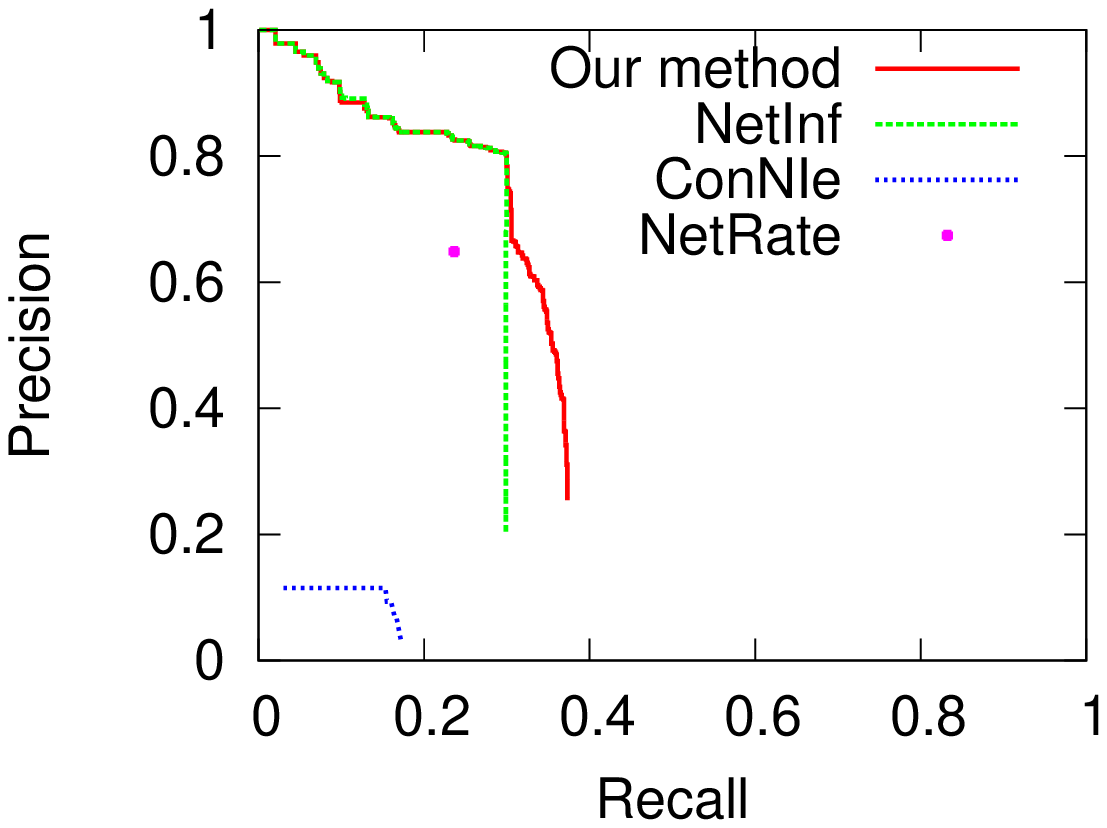}} \label{fig:pr_exp_rnd}} 
  \subfigure[PR (Hierarchical, \pow)]{\makebox[0.29\textwidth][c]{\includegraphics[width=0.28\textwidth]{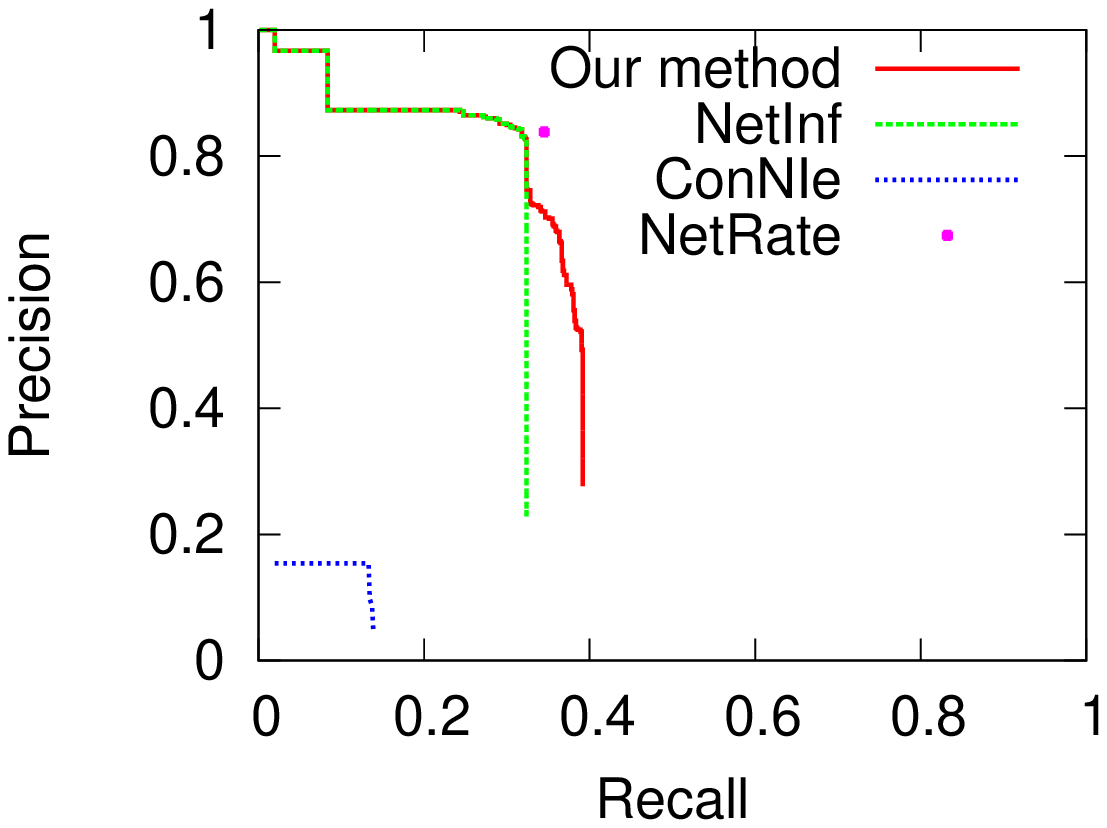}} \label{fig:pr_exp_hie}} 
  \subfigure[PR (Core-periphery, \ray)]{\makebox[0.29\textwidth][c]{\includegraphics[width=0.28\textwidth]{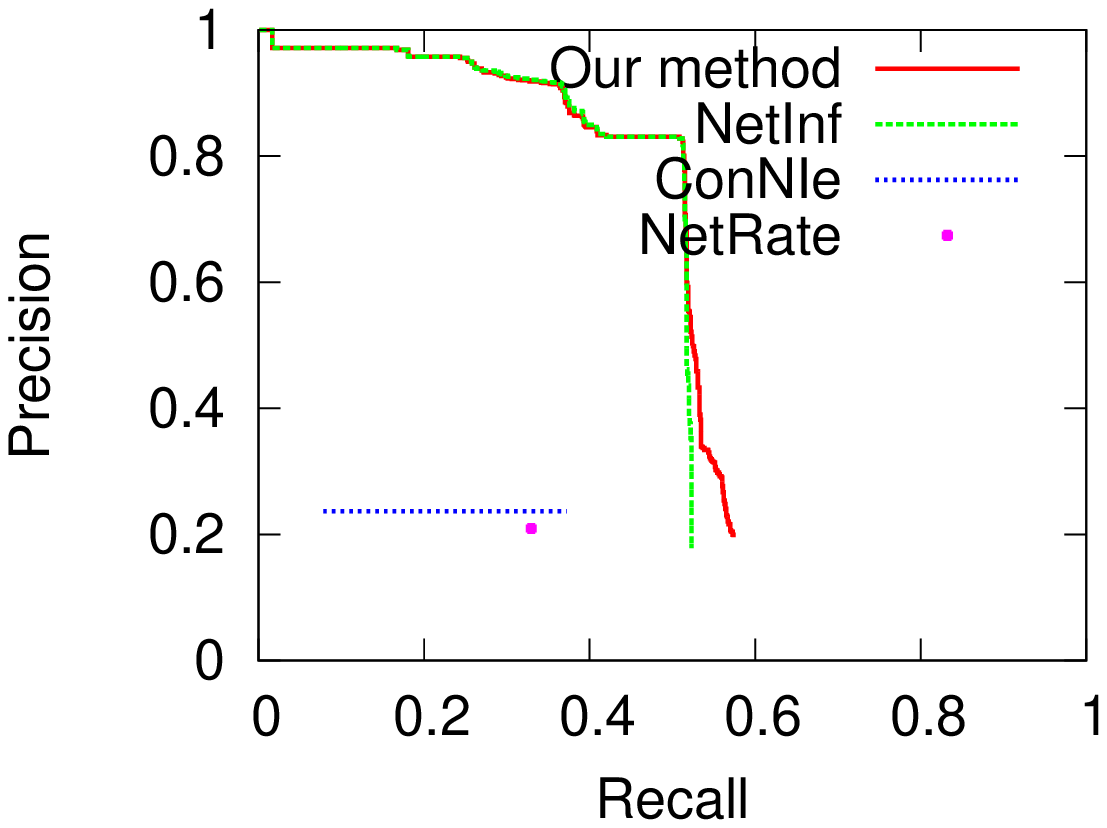}} \label{fig:pr_ray_cp}} \\
  \subfigure[Accuracy (Random, \expo)]{\makebox[0.29\textwidth][c]{\includegraphics[width=0.28\textwidth]{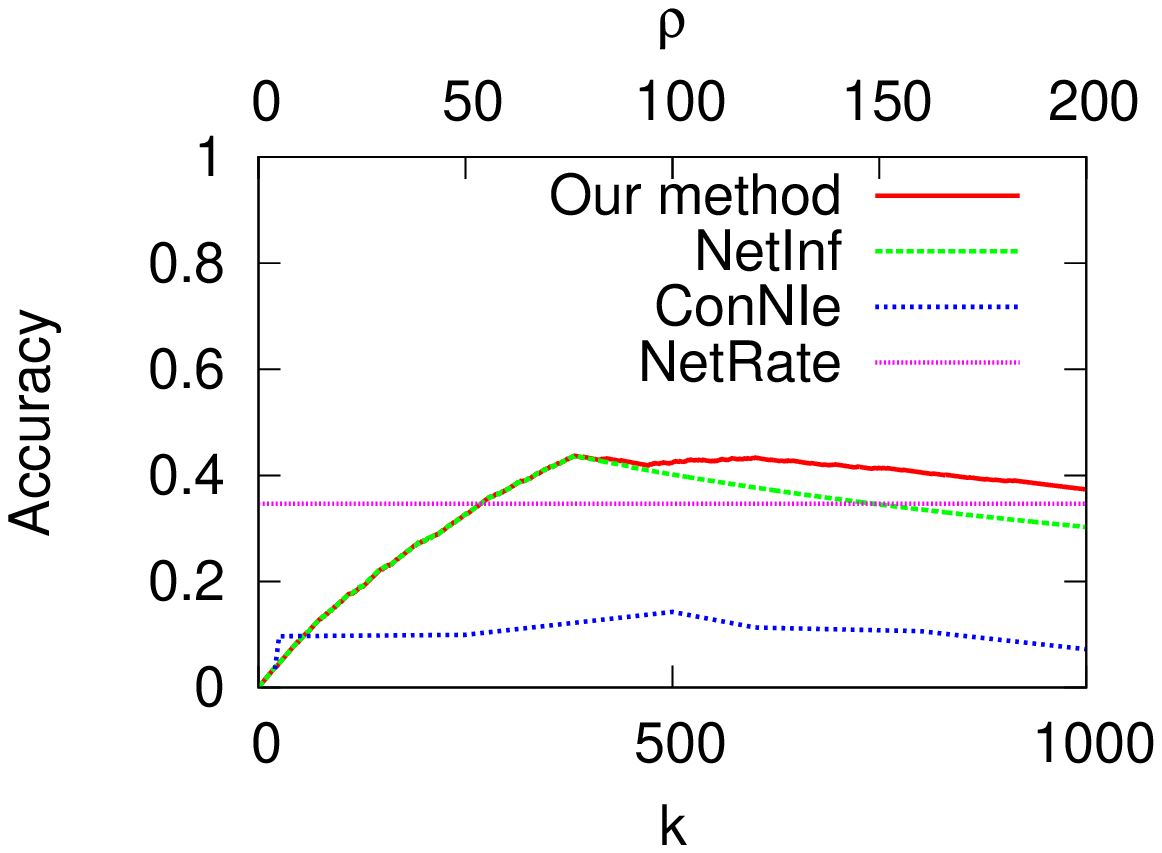}} \label{fig:acc_exp_rnd}} 
  \subfigure[Accuracy (Hierarchical, \pow)]{\makebox[0.29\textwidth][c]{\includegraphics[width=0.28\textwidth]{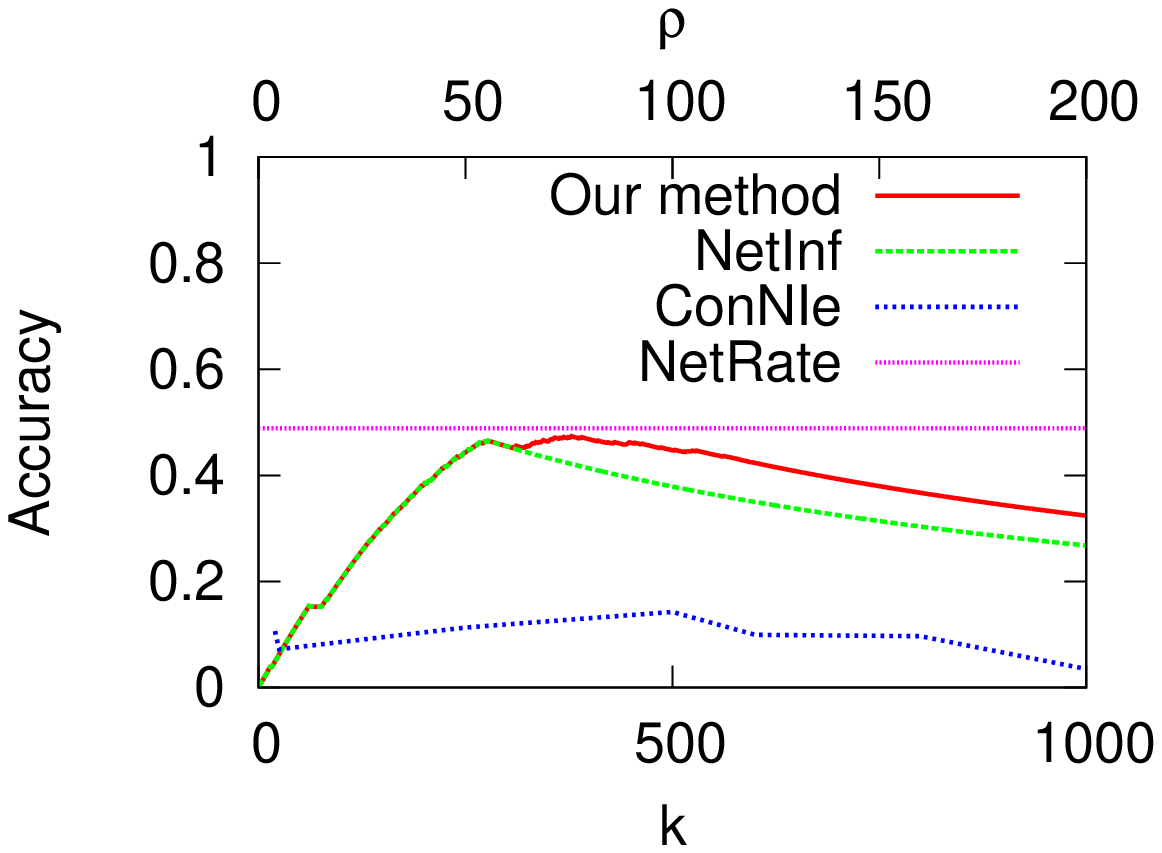}} \label{fig:acc_exp_hie}}
  \subfigure[Accuracy (Core-periphery, \ray)]{\makebox[0.29\textwidth][c]{\includegraphics[width=0.28\textwidth]{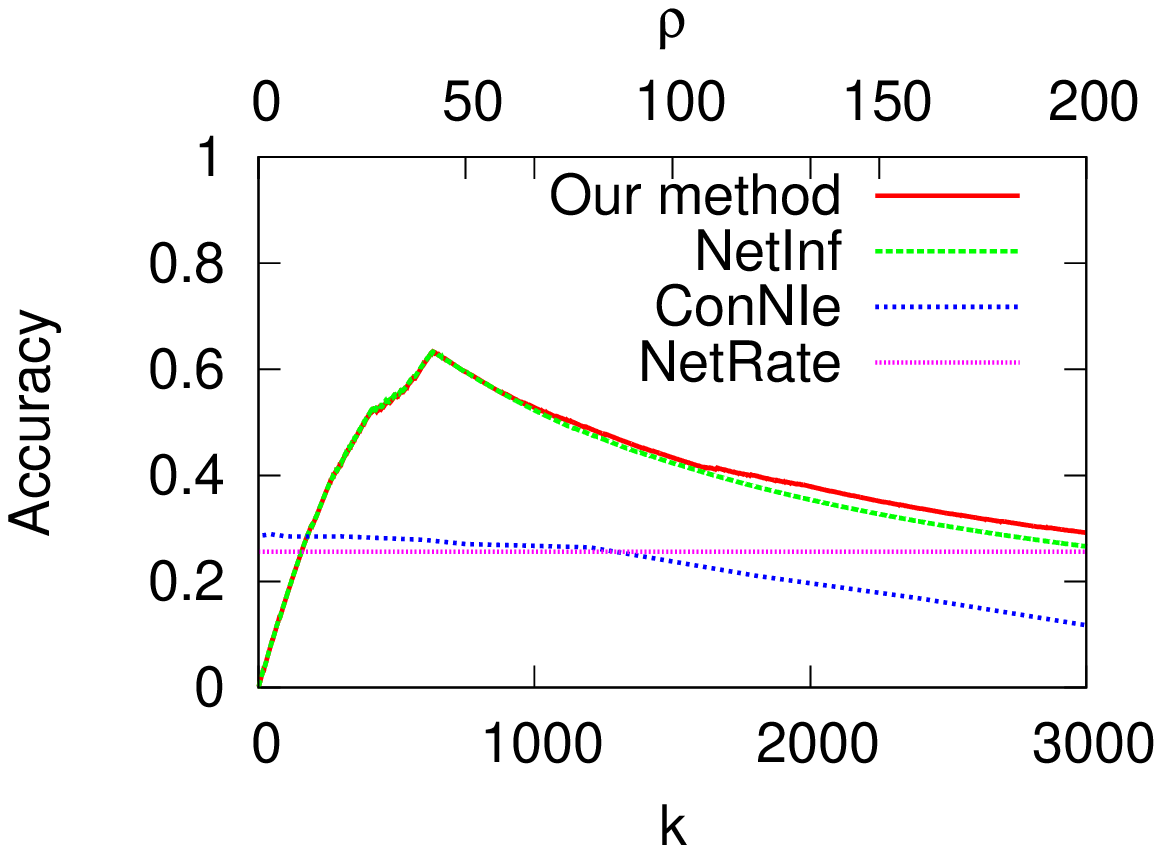}} \label{fig:acc_ray_cp}}
	\caption{
	Panels (a-c) plot precision against recall (PR); panels (d-f) plot accuracy. To control the solution sparsity or precision-recall tradeoff, we sweep over $k$ (number of edges) in our method 
	and \netinf and over $\rho$ (penalty factor) in \connie. \netrate has no tunable parameters and therefore outputs a unique solution.
	(a,d): 1,024 node random Kronecker network with Rayleigh (\ray) model.
	(b,e): 1,024 node hierarchical Kronecker network with power-law (\pow) model.
	(c,f): 1,024 node core-periphery Kronecker network with exponential (\expo) model. In all three networks, we recorded 200 cascades.
	}
	\label{fig:pr}
\end{figure*}

\xhdr{Efficient optimization} By construction, the empty graph $\bar{K}$ has score $0$, $F_{C}(\bt^1, \ldots, \bt^{|C|}|\bar{K})=0$, and the objective function $F_{C}$ is non-negative 
monotonic, $F_{C}(\bt^1, \ldots, \bt^{|C|}|G)\leq F_{C}(\bt^1, \ldots, \bt^{|C|}|G')$, for any $G\subseteq G'$. Therefore, adding more edges to $G$ never decreases the solution quality, 
and thus the complete graph maximizes $F_{C}$.  However, in real-world scenarios, we are interested in inferring sparse graphs with a small number of edges. Thus, we would like to 
solve:
\begin{equation}
    G^{*}=\argmax_{|G|\leq \nedge}F_{C}(\bt^1, \ldots, \bt^{|C|}|G),\label{eq:maxfc}
\end{equation}
where the maximization is over all directed networks $G$ of at most $k$ edges. Naively searching over all $k$ edge graphs would take time exponential in $k$, which is intractable. Moreover, finding the 
optimal solution to Eq.~\ref{eq:maxfc} is NP-hard:
\begin{theorem} \label{thm:np}
The diffusion network inference problem defined by Eq.~\ref{eq:maxfc} is NP-hard.
\end{theorem}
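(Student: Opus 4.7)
The plan is to prove NP-hardness by polynomial-time reduction from Max-$k$-Cover, adapting the argument that \citet{manuel10netinf} used for the original \netinf objective. Given a Max-$k$-Cover instance with universe $U=\{u_1,\dots,u_n\}$, subsets $S_1,\dots,S_m$, and budget $k$, I would build a diffusion network inference instance, together with a threshold $V$, such that there exist $k$ subsets covering at least $\ell$ elements if and only if there exists a $k$-edge graph $G$ with $F_{C}(\bt^{1},\ldots,\bt^{|C|}|G)\geq V$.

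For the construction, I introduce $m$ source nodes $v_1,\dots,v_m$ (one per subset) and $n$ destination nodes $d_1,\dots,d_n$ (one per element). For each element $u_j$ I create a cascade $\bt^{j}$ in which the nodes $v_i$ with $u_j\in S_i$ are infected at $t=0$ and $d_j$ is infected at $t=1$; all other nodes have infection time $\infty$. The candidate edges are the pairs $(v_i,d_j)$ with $u_j\in S_i$. A small gadget (for instance, auxiliary cascades that make edges out of the same $v_i$ strictly more valuable when taken together, or equivalently a reformulation that routes each subset through a single dedicated edge before branching out) converts the edge budget $k$ into a subset budget $k$.

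Correctness follows by analyzing $F(\bt^{j}|G)$. Since $d_j$ is the only node in $\bt^{j}$ with a nontrivial in-edge sum, the contribution of cascade $\bt^{j}$ reduces to
\[
F(\bt^{j}|G)=\log\bigl(1+x_j W\bigr),
\]
where $x_j$ is the number of chosen subsets containing $u_j$ and $W:=\varepsilon^{-1}f(1|0;\alpha)$ is a weight tunable through $\alpha$ and $\varepsilon$. Taking $W$ polynomially large, $\log(1+x_jW)$ is $0$ when $x_j=0$ and is $\log W$ to leading order when $x_j\geq 1$, with residual of size $O(\log k)$. Summing over $j$ gives $F_C\approx(\text{elements covered})\cdot\log W+O(n\log k)$, so a threshold slightly below $\ell\log W$ separates yes- from no-instances of Max-$k$-Cover. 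A standard padding argument in $W$ makes the separation exact.

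The main obstacle is precisely the concavity of $\log$: it smears the $\{0,1\}$ coverage indicator into a diminishing-returns curve, exactly the property that makes the problem submodular in the first place. Handling this means choosing $W$ large enough relative to $k$ (the maximum feasible value of any $x_j$) so that the leading-order \emph{covered-or-not} term strictly dominates the \emph{how-many-times-covered} residual across all feasible edge selections. A secondary concern is forcing all edges emanating from a single $v_i$ to behave as an indivisible unit; this is what the gadget achieves, and I expect several equivalent implementations to work (e.g., duplicating each cascade so that omitting any $(v_i,d_j)$ from a "selected" subset $S_i$ strictly lowers $F_C$ by more than the gain of reallocating that edge elsewhere).
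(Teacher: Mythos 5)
The paper's own proof is a one-line citation (``by reduction from MAX-$k$-COVER''), so you have correctly identified the intended source problem, and your handling of the concavity of the logarithm is sound: taking $W=\varepsilon^{-1}f(1|0;\alpha)>k^{n}$ (still polynomially many bits) makes the per-element contribution $\log(1+x_jW)$ lie in $[\log(1+W),\,\log(1+W)+\log k]$ whenever $x_j\geq 1$, so the threshold $V=\ell\log(1+W)$ separates instances exactly. That part of the argument is a real issue and you resolve it correctly.

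However, there is a genuine gap in the construction itself, and it sits exactly where you label a ``secondary concern.'' With one destination node $d_j$ per element and candidate edges $(v_i,d_j)$, each edge improves exactly one cascade, so any $k$ edges with distinct targets cover exactly $\min(k,n)$ elements; the instance is trivially solvable and encodes nothing of MAX-$k$-COVER. The gadget that makes ``one selected edge $=$ one selected subset'' is therefore the entire reduction, not an afterthought, and neither of your proposed gadgets works as stated. The first --- auxiliary cascades making edges out of the same $v_i$ ``strictly more valuable when taken together'' --- would require $F_C$ to exhibit complementarities among edges, which is impossible: Theorem~\ref{thm:submodular} shows $F_C$ is submodular for \emph{every} admissible instance, so no instance can make a bundle of edges super-additive. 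The second --- routing each subset through a dedicated edge ``before branching out'' --- still charges $1+|S_i|$ edges to activate subset $S_i$, again breaking the budget correspondence. The standard fix is to collapse all destinations into a \emph{single} sink node $d$: for each element $u_j$ build a cascade in which exactly the nodes $\{v_i : u_j\in S_i\}$ are infected before $d$ and all other nodes have infection time $\infty$. Then the only useful candidate edges are $(v_i,d)$, $i=1,\dots,m$ --- one per subset --- and adding $(v_i,d)$ raises $F(\bt^j|\cdot)$ precisely for those cascades $j$ with $u_j\in S_i$, which combined with your large-$W$ threshold argument completes the reduction.
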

\begin{proof}
By reduction from the MAX-$k$-COVER pro\-blem~\cite{khuller1999budgeted}.
\end{proof}

While finding the optimal solution is hard, we will now show that $F_{C}$ satisfies \emph{submodularity} on the set of directed edges in $G$, a natural diminishing returns property, which 
will allow us to efficiently find a \emph{provable} near-optimal solution to the optimization problem.

A set function $F: 2^{W}\rightarrow\mathbb{R}$ mapping subsets of a finite set $W$ to the real numbers is \emph{submodular} if whenever $A\subseteq B\subseteq
W$ and $s\in W\setminus B$, it holds that $F(A\cup\{s\})-F(A)\geq F(B\cup\{s\})-F(B)$, i.e., adding $s$ to the set $A$ increases the score more than adding $s$ to the 
set $B$.  We have the following result:
\begin{theorem}\label{thm:submodular}
Let $V$ be a set of nodes, and $C$ be a collection of cascades hitting the nodes $V$. Then $F_{C}(\bt^1, \ldots, \bt^{|C|}|G)$ is a submodular function 
$F_{C} : 2^{W}\rightarrow \mathbb{R}$ defined over subsets $W\subseteq V\times V$ of directed edges.
\end{theorem}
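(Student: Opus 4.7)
The plan is to peel off the cascade sum and the sum over target times, so that submodularity reduces to a one-line observation about the concavity of $\log$.

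First, since nonnegative sums of submodular set functions are submodular, it suffices to prove that for every fixed cascade $\mathbf{t}^c$ the single-cascade score $F(\mathbf{t}^c \mid G)$ is submodular in $G$. Applying the same reduction once more, I can split $F(\mathbf{t}^c \mid G)$ into the sum over target nodes $t_j \in \mathbf{t}^c$ of the term
\begin{equation*}
F_{c,j}(G) \;=\; \log\!\left( \sum_{t_i \in \mathbf{t}^c :\, t_i \leq t_j} w_c(i,j)\,\mathbb{1}[(i,j)\in G \cup E_m]\right),
\end{equation*}
where $E_m$ denotes the $\varepsilon$-edges from the external source $m$ to every node in the cascade (these edges are present in every graph, including $\bar K$). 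The goal therefore becomes: show $F_{c,j}$ is submodular as a function of $G \subseteq V\times V$.

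Next I observe that $F_{c,j}$ depends only on those edges $(i,j)$ that (i) point into $j$ and (ii) satisfy $t_i \leq t_j$; call this set of relevant edges $R_{c,j}$. Any edge outside $R_{c,j}$ contributes a zero marginal and hence trivially preserves diminishing returns, so I only need to check the submodularity inequality for $s \in R_{c,j}$. For such an edge $s=(i,j)$ with weight $a_s := w_c(i,j) \geq 0$, and any $A \subseteq B \subseteq W\setminus\{s\}$, the marginal gain is
\begin{equation*}
F_{c,j}(A\cup\{s\}) - F_{c,j}(A) \;=\; \log\!\left(1 + \frac{a_s}{S_A}\right),
\end{equation*}
where $S_A := \varepsilon^{-1}\varepsilon + \sum_{e \in A \cap R_{c,j}} a_e \geq 1$ is the partial sum (the $\geq 1$ bound coming from the always-present $\varepsilon$-edge from $m$ to $j$, whose normalized weight is $1$, so we never take $\log$ of zero). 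Because $A\subseteq B$ implies $S_A \leq S_B$ and the map $x \mapsto \log(1 + a_s/x)$ is decreasing in $x > 0$, we conclude
\begin{equation*}
F_{c,j}(A\cup\{s\}) - F_{c,j}(A) \;\geq\; F_{c,j}(B\cup\{s\}) - F_{c,j}(B),
\end{equation*}
which is exactly submodularity of $F_{c,j}$. Summing over $j$ and over $c$ gives submodularity of $F_C$.

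I do not expect a real obstacle here; the only care one must take is the base case guaranteeing the $\log$ argument is bounded away from zero, which is exactly why the $\varepsilon$-edges and the normalization $w_c(i,j)=\varepsilon^{-1}f(t_j\mid t_i;\alpha)$ were introduced earlier in the paper. With that in hand, the argument is just concavity of $\log$ composed with a nonnegative modular function, a standard template for proving submodularity of log-likelihood objectives.
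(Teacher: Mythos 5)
Your proof is correct and follows essentially the same route as the paper's: the paper likewise observes that adding an edge $(r,s)$ changes only the summand for node $s$, writes the marginal gain as $\log\bigl((T_{G,e}+w_c(r,s))/T_{G,e}\bigr)$ with $T_{G,e}$ monotone in the edge set, and invokes monotonicity of $x\mapsto\log(1+a/x)$ plus closure of submodularity under nonnegative sums. Your per-node decomposition $F_{c,j}$ and the explicit handling of the $\varepsilon$-edge base case just make explicit what the paper leaves implicit.
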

\begin{proof}
Fix a cascade $\bt^c$, graphs $G\subseteq G'$ and an edge $e=(r,s)$ not contained in $G'$.  We will show that $F(\bt^c|G\cup\{e\})-F(\bt^c|G)\geq
F(\bt^c|G'\cup\{e\})-F(\bt^c|G')$. 
Let $w_{i,j}$ be the weight of edge $(i,j)$ in $G$, and $w'_{i,j}$ in $G'$. Since $G\subseteq G'$, it holds that $w'_{i,j} \geq w_{i,j} \geq 0$. If $(i,j)$ is contained 
in $G$ and $G'$, then $w_{i,j} = w'_{i,j}$. Let $T_{A, e} = \sum_{i \in A \backslash \{r\} : t_j \geq t_i} w_{c}(i, s)$. It holds that $T_{G', e} \geq T_{G, e}$. 
Hence,
\begin{align*}F(\bt^c|G\cup\{e\})-F(\bt^c|G)&=\log \left( \frac{T_{G, e} + w_{c}(r, s)}{T_{G, e}} \right) \\
&\geq \log \left( \frac{T_{G', e} + w_{c}(r, s)}{T_{G', e}} \right) \\
&=F(\bt^c|G'\cup\{e\})-F(\bt^c|G'),
\end{align*}
proving submodularity of $F(\bt^c|G)$. Now, since nonnegative linear combinations of submodular functions are submodular, the function 
\begin{align*}
F_{C}(\bt^1, \ldots, \bt^{|C|}|G)=\sum_{c\in C}F(\bt^c|G)
\end{align*}
is submodular as well. 
\end{proof}

We now optimize $F_{C}(G)$ by using the \emph{greedy algorithm}, a well-known efficient heuristic with provable performance guarantees. The algorithm starts 
with an empty graph $\bar{K}$ and it adds edges that maximize the \emph{marginal gain} sequentially. That means, at iteration $i$ we choose the edge
$
e_i = \argmax_{e \in G \backslash G_{i-1}} F_C(G_{i-1} \cup \{e\}) - F_C(G_{i-1}).
$

The algorithm stops once it has selected $\nedge$ edges, and returns the solution $\hat{G}=\{e_{1},\dots,e_{\nedge}\}$. The greedy algorithm is guaranteed to 
find a set $\hat{G}$ which achieves at least a constant fraction $(1-1/e)$ (of the optimal value achievable using $\nedge$ edges~\citep{nemhauser1978analysis}.
Starting from the near-optimal solution given by the greedy algorithm, we could possibly improve the solution by applying a local search procedure.
\begin{figure*}[t]
	\centering
	\subfigure[Random]{\makebox[0.29\textwidth][c]{\includegraphics[width=0.28\textwidth]{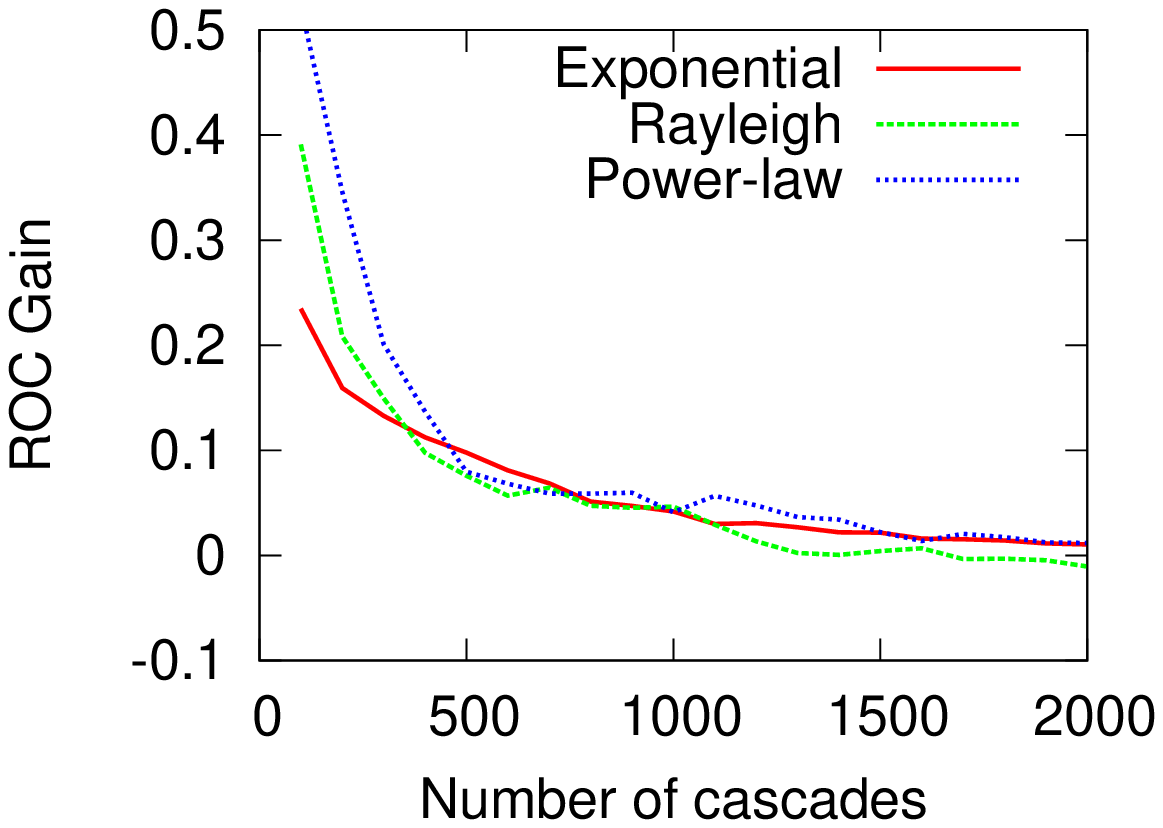}}
	\label{fig:performance-vs-num-cascades-1}}
	\subfigure[Hierarchical]{\makebox[0.29\textwidth][c]{\includegraphics[width=0.28\textwidth]{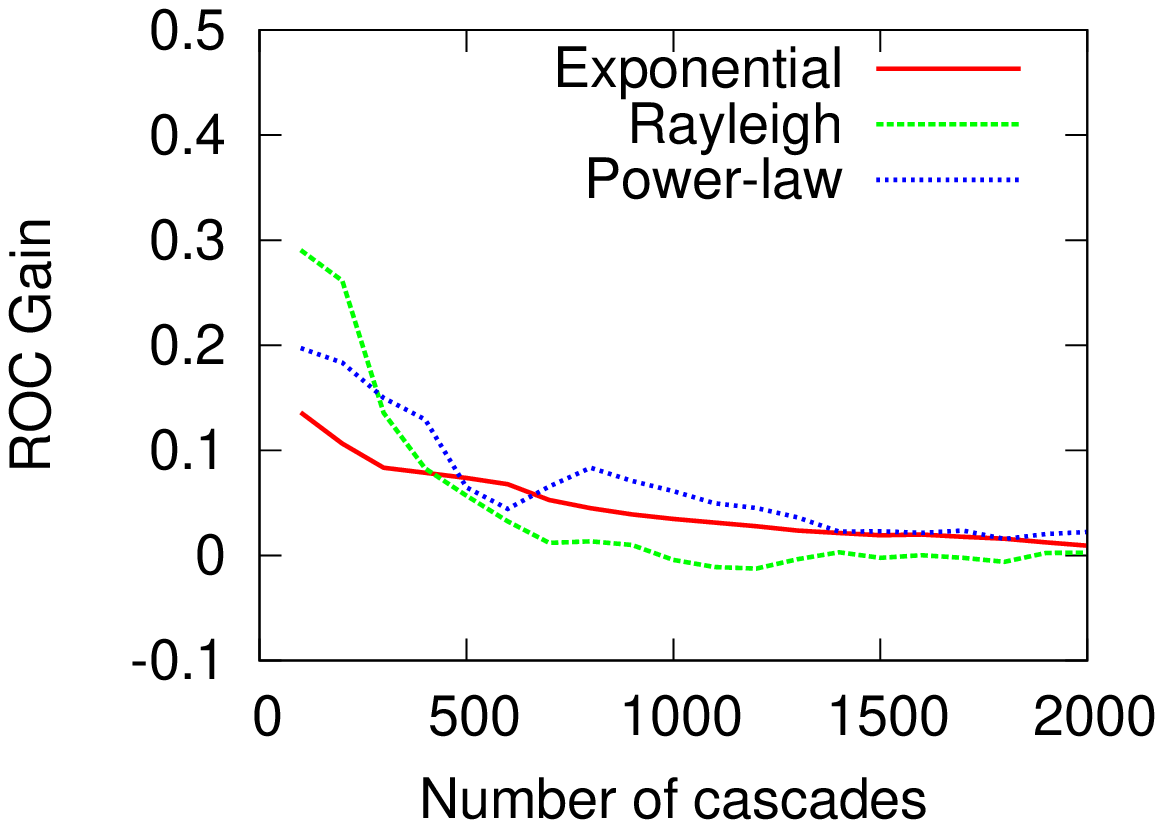}}
	\label{fig:performance-vs-num-cascades-2}}
	\subfigure[Core-periphery]{\makebox[0.29\textwidth][c]{\includegraphics[width=0.28\textwidth]{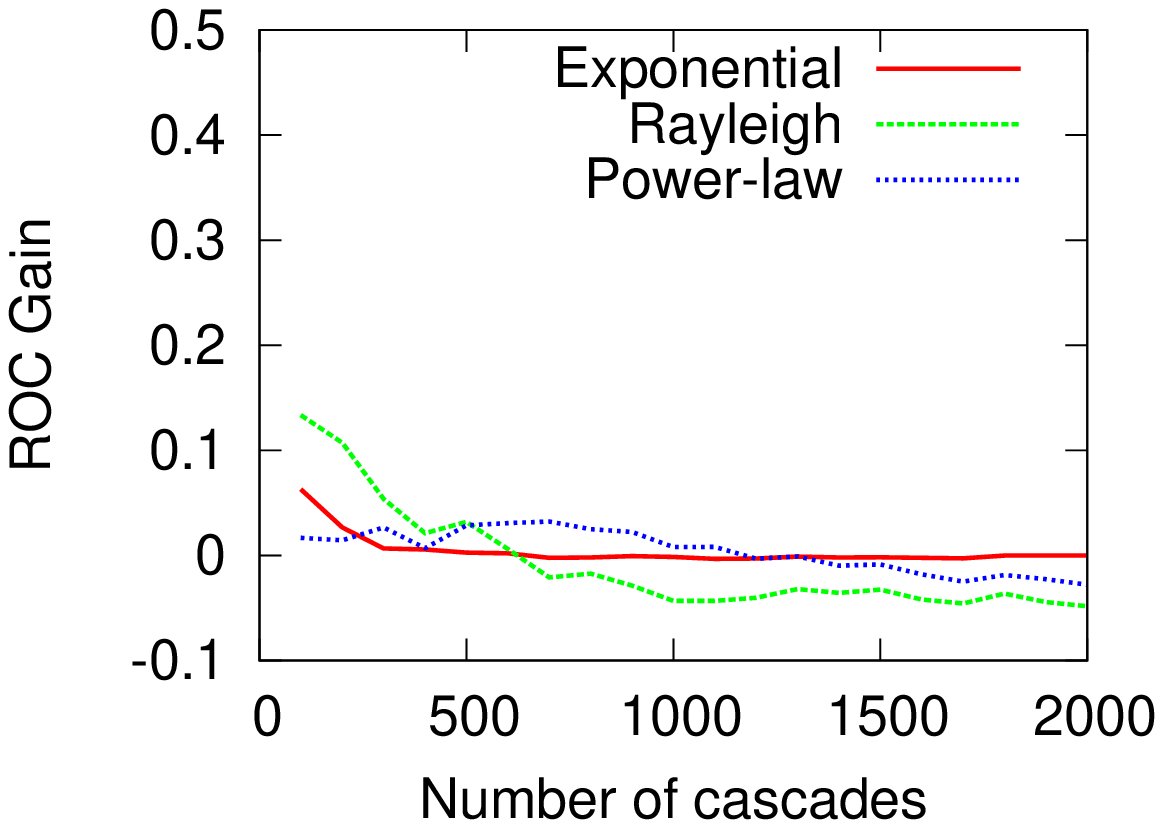}}
	\label{fig:performance-vs-num-cascades-3}}
 	\caption{Gain in Area Under the ROC curve (AUC) of our method compared to \netinf vs number of cascades for (a) a random Kronecker network,
	(b) a hierarchical Kronecker network and (c) a core-periphery Kronecker network with 1,024 nodes and 1,024 edges for all three transmission models. 
	Our method is able to more accurately infer a network for small number of cascades and it exhibits similar performance to \netinf for larger number of 
	cascades.
	} \label{fig:performance-vs-num-cascades}
\end{figure*}

As in the original \netinf formulation, our algorithm also allows for two speeds-up: localized updates and lazy evaluation (Algorithm~\ref{alg:modified-netinf}). We can also 
obtain an on-line bound based simply on the submodularity of the objective function~\cite{leskovec2007cost}.

\section{Experimental evaluation}
\label{sec:evaluation}
We evaluate our network inference algorithm in both synthetic and real networks. We use synthetic networks that aim to mimic the structure of
social networks, and real information networks that are based on the MemeTracker dataset\footnote{Data available at \url{http://memetracker.org}}.
We compare our method in terms of precision, recall, accuracy and scalability with several state-of-the-art algorithms: \netinf, \connie and \netrate. For 
the comparisons, we use the public domain implementations of these algorithms.

\subsection{Experiments on synthetic data}

\xhdr{Experimental setup}
We first generate synthetic networks using two different well-known models of social networks: the Forest Fire (scale free) model~\citep{barabasi99emergence} and the 
Kronecker model~\citep{leskovec2010kronecker}, and set the pairwise transmission rates of the edges of the networks by drawing samples from $\alpha \sim U(0.5, 1.5)$.
We then simulate and record a relatively small set of propagating cascades over each network using three different pairwise transmission likelihoods: exponential, power-law and 
Rayleigh. There are several reasons why we consider small set of cascades in comparison to the network size. First, all methods (including ours) assume that cascades propagate 
over a fixed network. Since social networks are highly dynamic~\citep{backstrom2011supervised}, changing and growing rapidly, we can only expect to record a small number of 
cascades over a fixed network. Second, tracking and recording cascades is a difficult and expensive process~\citep{leskovec2009kdd}. Therefore, it is desirable to develop network 
inference methods that work well with a small number of observed cascades.

\xhdr{Accuracy}
We compare the inferred and true networks via three measures: precision, recall and accuracy. Precision is the fraction of edges in the inferred network $\hat{G}$ present in the 
true network $G^*$ . Recall is the fraction of edges of the true network $G^*$ present in the inferred network $\hat{G}$. Accuracy is $1-\frac{\sum_{i,j} |I(\alpha^*_{i,j})-I(\hat{\alpha}_{i,j})|}{\sum_{i,j}I(\alpha^*_{i,j}) + \sum_{i,j} I(\hat{\alpha}_{i,j})}$, where $I(\alpha)=1$ if $\alpha > 0$ and $I(\alpha)=0$ otherwise. Inferred networks with no edges or only false edges have zero accuracy.

Figure~\ref{fig:pr} compares our method  the precision, recall and accuracy of our method with for three different 1,024 node Kronecker networks: a random 
network~\citep{erdos60random} (parameter matrix $[0.5, 0.5; 0.5, 0.5]$), a hierarchical network~\citep{clauset08hierarchical} ($[0.9, 0.1; 0.1, 0.9]$) and a core-periphery 
network~\citep{jure08ncp} ($[0.9, 0.5; 0.5, 0.3]$), and $200$ observed cascades.
In terms of precision-recall, our method is able to reach higher recall values than \netinf, \connie and \netrate, \ie, it is able to discover more true edges from a small number of cascades 
than other methods. For recall values that are reachable using \netinf, our method and \netinf offer a very similar precision value. Our methods allows for higher recall in comparison with
\netinf because it gets exhausted\footnote{A greedy method (ours and \netinf) gets exhausted at iteration $k$ when there are not any more edges with marginal gain larger than zero.} later 
for considering all possible trees per cascade instead of only the most probable one. 
In terms of accuracy, our method outperforms \netinf for more than half of their outputted solutions, and matches the remaining ones. \connie and \netrate'{}s accuracy is typically
significantly lower. However, \netrate is able to beat all other methods for the hierarchical Kronecker network.
If we compare with previous studies~\cite{meyers10netinf}, the performance of \connie seem to degrade the most due to the limited availability in cascades and perhaps the variable
transmission rates across the networks (as reported previously in \citet{manuel11icml}).

\xhdr{Performance vs. cascade coverage} Intuitively, the more cascades we observe, the more accurately \emph{any} algorithm infers a network. Actually, when the number of 
cascades is large in comparison to the network size, we expect differences in performance among methods become negligible. 
Figure~\ref{fig:performance-vs-num-cascades} plots the gain in Area Under the ROC curve (AUC) for our method in comparison with \netinf, (AUC$_{\mbox{\tiny our method}} -
$ AUC$_{\mbox{\tiny \netinf}}$)/AUC$_{\mbox{\tiny \netinf}}$, against number of observed cascades for several Kronecker networks and 
transmission models ($\beta = 0.5$ and $\alpha \sim U(0.5, 1.5)$ in all models). We observe that the difference in performance between our method and \netinf is greater for 
small number of cascades and for a large enough number of cascades, both methods perform similarly or \netinf slightly outperforms our method. 
\begin{figure}[t]
	\centering
	\includegraphics[width=0.28\textwidth]{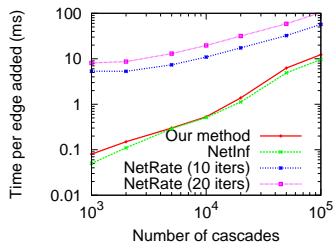} 	
	\caption{Average running time per edge added against number of cascades. We used a 1,024 node random Kronecker with exponential transmission model.} 
	\label{fig:scalability}
\end{figure}

\xhdr{Scalability} Figure~\ref{fig:scalability} plots the average computation time per edge added against number of cascades. 
Since \netrate is not greedy and instead solve a convex program for each node in the network, we divided their total running times by the number of edges that our method added until 
getting exhausted (until no edge has marginal gain greater than zero).
We used the publicly available implementations of our algorithm and \netinf, both coded in C++. To carry out a fair comparison with \netrate, we have developed a projected full gradient 
descend C++ implementation of \netrate, which is considerably faster than the publicly available Matlab implementation (that uses the CVX convex solver), and we run $10$ and $20$
iterations of full gradient descend (remarkably, even running one single iteration was slower than \netinf and our method). We do not report running times for \connie since the publicly 
available code is a Matlab implementation (that uses the SNOPT solver) and probably slower than a C++ implementation.
Our method and \netinf are approximately one order of magnitude faster than \netrate. 
Finally, note that the running time of our algorithm does not depend on the network size but the number of cascades and cascade size. As an experimental validation, we run our algorithm 
in two networks with $100,000$ and $200,000$ nodes and an average of two edges per node using $10,000$ cascades and our algorithm took only $10.12$ ms and $12.14$ ms per edge added.

\subsection{Experiments on real data}

\xhdr{Experimental setup} We use the publicly available MemeTracker dataset, which contains more than $172$ million news articles and blog posts from $1$ million online 
sources~\citep{leskovec2009kdd}. 
Sites publish pieces of information and use hyperlinks to refer to their sources, which are other sites that published the same or closely related pieces of information. Therefore, 
we use hyperlinks to trace information propagation over blogs and media sites.
A hyperlink cascade is simply a collection of time-stamped hyperlinks between sites (in blog or news media posts) that refer to the same or closely related pieces of information. We record 
one hyperlink cascade per piece or closely related pieces of information. We extract the top 1,000 media sites and blogs with the largest number of documents, 10,000 hyperlinks and 
500 longest hyperlink cascades.
We create a ground truth network $G$ which contains an edge between a site $u$ and a site $v$ if there is at least a site post in the site $u$ that links to a post on the site $v$. We then
infer a network $\hat{G}$ from the hyperlink cascades and evaluate precision, recall and accuracy with respect to $G$. We consider a power law pairwise transmission likelihood.
Note that we trace the flow of information and create a ground truth network using hyperlinks because we are interested in a quantitative evaluation of our method in comparison with the 
state of the art. For richer qualitative insights, cascades based on short textual phrases should be considered, but that goes beyond the scope of this paper.

\xhdr{Accuracy} Figure~\ref{fig:performance-real-data} shows precision, recall and accuracy of our method in comparison with \netinf, \connie and \netrate. Our method reaches higher 
recall values than any other methods. In terms of accuracy, it beats others for the majority of their outputted solutions. As in the synthetic experiments, the shortage of recorded cascades 
degrades \connie'{}s performance dramatically.
\begin{figure}[t]
	\centering
	\subfigure[Precision-recall]{\includegraphics[width=0.23\textwidth]{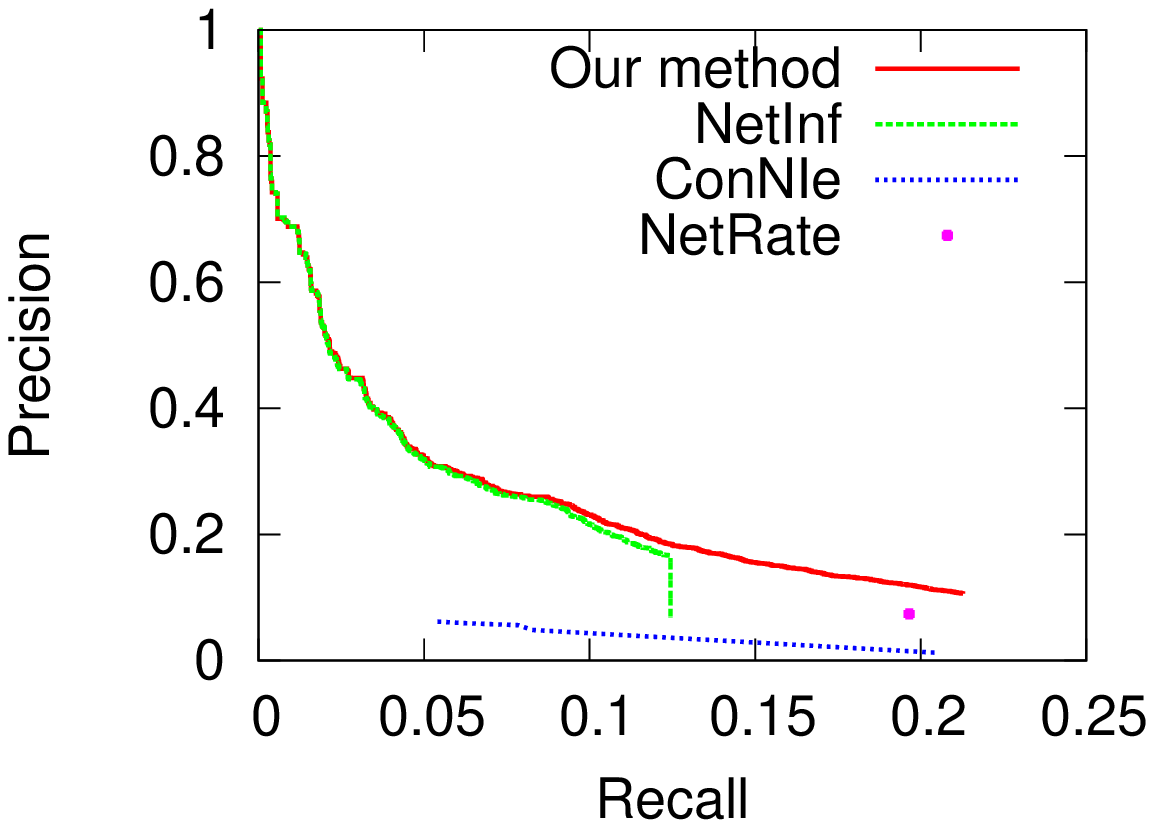} 
	\label{fig:precision-recall-real-data}}
	\subfigure[Accuracy]{\includegraphics[width=0.23\textwidth]{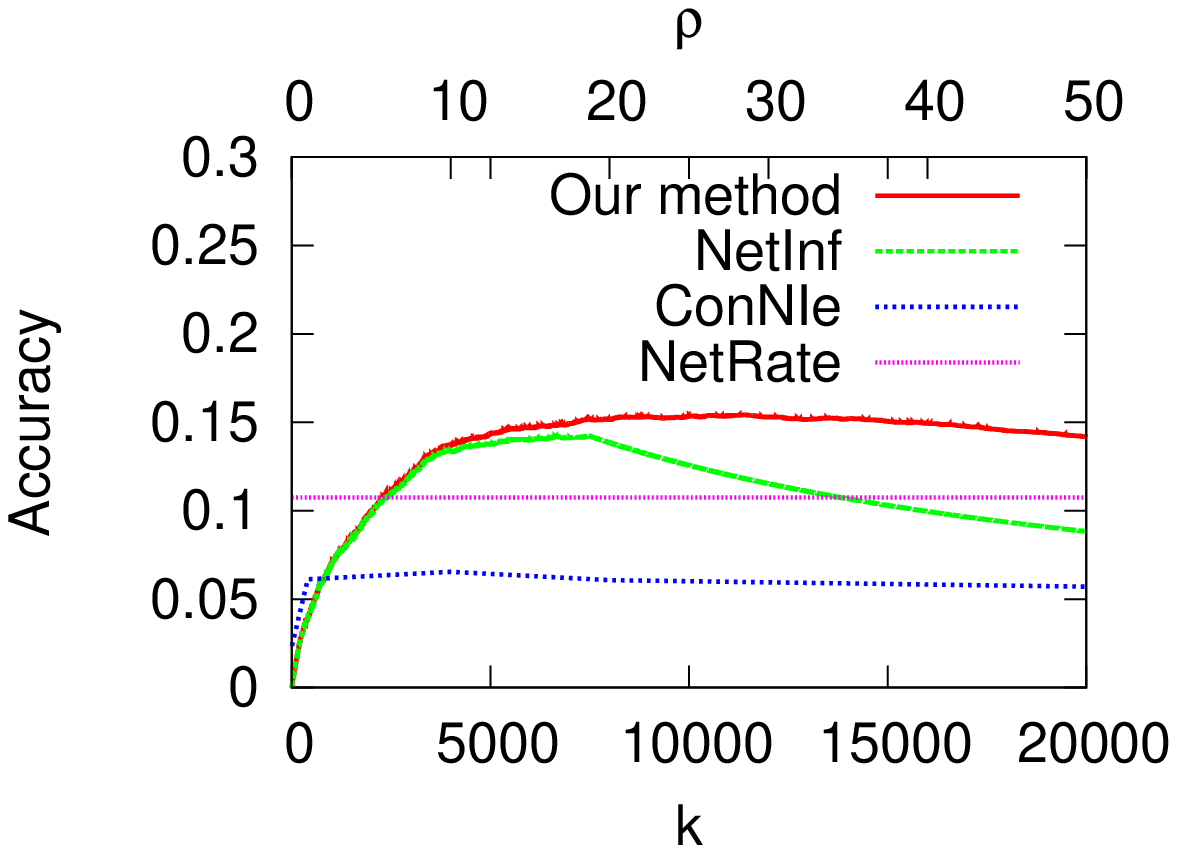}
	\label{fig:accuracy-real-data}} 	
	\caption{Real data. Panel (a) plots precision-recall and panel (b) accuracy on a 1,000 node hyperlink network with 10,000 edges using 1,000 cascades and a power-law model.
	To control the solution sparsity or precision-recall tradeoff, we sweep over k (number of edges) in our method and \netinf and over $\rho$ (penalty factor) in \connie. Our method 
	beats others for the majority of their outputted solutions. 
	} 
	\label{fig:performance-real-data}
\end{figure}

\section{Conclusions}
\label{sec:conclusions}
We have developed an efficient approximation algorithm with pro\-va\-ble near-optimal performance that solves an open problem on network inference from diffusion traces (or 
cascades) first introduced by~\citet{manuel10netinf}. 
In our work, for each observed cascade we consider all possible ways in which a diffusion process spreading over the network can create the cascade, in contrast with \netinf, that 
considers only the most probable way (tree). 

Perhaps surprisingly, despite considering all trees, we show experimentally that the running time of our method and \netinf are similar, and they are several orders of magnitude faster 
than alternative network inference methods based on convex programming as \netrate and \connie. Moreover, our algorithm typically outperforms \netinf, \netrate and \connie in terms 
of precision, recall and accuracy in highly dynamic networks in which we only observe a relatively small set of cascades before they change significantly.

\bibliographystyle{icml2012}
\bibliography{refs}

\end{document}